% LaTeX/AMS-LaTeX

\documentclass [11pt,a4paper] {article}

\usepackage[cp1252]{inputenc}

\usepackage{amssymb}
\usepackage{amsmath}
\usepackage{amsfonts,amssymb}
\usepackage[dvips]{graphicx}
\usepackage{bbm}
\usepackage{enumerate}

\usepackage{amsthm}

\DeclareMathAlphabet{\mathpzc}{OT1}{pzc}{m}{it}

\setlength{\hoffset}{-1.5cm}
\setlength{\voffset}{-1.0cm}
\setlength{\textwidth}{16.5cm}
\setlength{\textheight}{22.0cm}
\linespread{1}

\def\SmallColSep{\setlength{\arraycolsep}{1pt}}

\newtheorem{lemma}{Lemma} %Define Lemma
\newtheorem{theorem}{Theorem} %Define Theorem
\newtheorem{corollary}{Corollary} %Define Corollary

\begin{document}

\title{The meet of incommutable projection operators contradicts Burnside's theorem}

\author{Arkady Bolotin\footnote{$Email: arkadyv@bgu.ac.il$\vspace{5pt}} \\ \textit{Ben-Gurion University of the Negev, Beersheba (Israel)}}

\maketitle

\begin{abstract}\noindent In contrast to conjunctions of commutable projection operators unambiguously represented by their meets, the mathematical representation of conjunctions of incommutable projection operators is a question that has yet to be solved. This question relates to another asking whether the set of the column spaces of the projection operators, commutable and incommutable alike, forms a lattice. As it is demonstrated in the paper, if the Hilbert space is finite, the column spaces of the incommutable projection operators cannot be elements of one partially ordered set in accordance with Burnside’s theorem on matrix algebras.\\

\noindent \textbf{Keywords:} Quantum mechanics; Column spaces; Null spaces; Invariant-subspace lattices; Burnside’s Theorem\\
\end{abstract}

\section{Introduction}  %{I}

\noindent Finding a proper mathematical representation of conjunction in the case of propositions associated with incommutable projection operators is the long-lasting problem. The essence of this problem can be presented as follows.\\

\noindent Take as an example the projection operators relating to a spin-half particle associated with the two-dimensional Hilbert space $\mathcal{H} = \mathbb{C}^2$:\smallskip

\begin{equation}  %{Eq.1}
   \hat{0}
   =
   \!\left[
      \begingroup\SmallColSep
      \begin{array}{r r}
         0 & 0 \\
         0 & 0
      \end{array}
      \endgroup
   \right]
   \,
   ,
   \,\,
   \hat{1}
   =
   \!\left[
      \begingroup\SmallColSep
      \begin{array}{r r}
         1 & 0 \\
         0 & 1
      \end{array}
      \endgroup
   \right]
   \;\;\;\;  ,
\end{equation}

\begin{equation}  %{Eq.2}
   \hat{P}_{1}^{(z)}
   =
   \!\left[
      \begingroup\SmallColSep
      \begin{array}{r r}
         1 & 0 \\
         0 & 0
      \end{array}
      \endgroup
   \right]
   \,
   ,
   \,\,
   \hat{P}_{2}^{(z)}
   =
   \!\left[
      \begingroup\SmallColSep
      \begin{array}{r r}
         0 & 0 \\
         0 & 1
      \end{array}
      \endgroup
   \right]
   \;\;\;\;  ,
\end{equation}

\begin{equation}  %{Eq.3}
   \hat{P}_{1}^{(x)}
   =
   \!\frac{1}{2}
   \!\left[
      \begingroup\SmallColSep
      \begin{array}{r r}
         1 & 1 \\
         1 & 1
      \end{array}
      \endgroup
   \right]
   \,
   ,
   \,\,
   \hat{P}_{2}^{(x)}
   =
   \!\frac{1}{2}
   \!\left[
      \begingroup\SmallColSep
      \begin{array}{r r}
         1 & -1 \\
        -1 &  1
      \end{array}
      \endgroup
   \right]
   \;\;\;\;  ,
\end{equation}

\begin{equation}  %{Eq.4}
   \hat{P}_{1}^{(y)}
   =
   \!\frac{1}{2}
   \!\left[
      \begingroup\SmallColSep
      \begin{array}{r r}
         1 & -i \\
          i &  1
      \end{array}
      \endgroup
   \right]
   \,
   ,
   \,\,
   \hat{P}_{2}^{(y)}
   =
   \!\frac{1}{2}
   \!\left[
      \begingroup\SmallColSep
      \begin{array}{r r}
         1 & i \\
        -i  & 1
      \end{array}
      \endgroup
   \right]
   \;\;\;\;  .
\end{equation}
\smallskip

\noindent These projection operators are in one-to-one correspondence with their \textit{column spaces} (a.k.a. \textit{ranges}), the closed (under addition and multiplication) subspaces of $\mathbb{C}^2$. The collection of these column spaces – denoted by $\mathcal{L}(\mathbb{C}^2)$ – is as follows\smallskip

\begin{equation}  %{Eq.5}
   \mathcal{L}(\mathbb{C}^2)
   =
   \Bigg\{
      \{0\}
      ,
      \left\{
         \!\left[
            \begingroup\SmallColSep
            \begin{array}{r}
               a \\
               0
            \end{array}
            \endgroup
         \right]\!
      \right\}
      ,
      \left\{
         \!\left[
            \begingroup\SmallColSep
            \begin{array}{r}
               0 \\
               a
            \end{array}
            \endgroup
         \right]\!
      \right\}
      ,
      \left\{
         \!\left[
            \begingroup\SmallColSep
            \begin{array}{r}
               a \\
               a
            \end{array}
            \endgroup
         \right]\!
      \right\}
      ,
      \left\{
         \!\left[
            \begingroup\SmallColSep
            \begin{array}{r}
               a \\
              -a
            \end{array}
            \endgroup
         \right]\!
      \right\}
      ,
      \left\{
         \!\left[
            \begingroup\SmallColSep
            \begin{array}{r}
               ia \\
               a
            \end{array}
            \endgroup
         \right]\!
      \right\}
      ,
      \left\{
         \!\left[
            \begingroup\SmallColSep
            \begin{array}{r}
               a \\
              ia
            \end{array}
            \endgroup
         \right]\!
      \right\}
      ,
      \mathbb{C}^2
   \Bigg\}
   \;\;\;\;  ,
\end{equation}
\smallskip

\noindent where $\{0\}$ stands for $\mathrm{ran}(\hat{0})$, the zero-dimensional subspace (containing only the zero vector), the subspace $\mathbb{C}^2$ represents $\mathrm{ran}(\hat{1})$, and $a \in \mathbb{R}$.\\

\noindent Consistent with the axioms of quantum mechanics, the partially ordered set of all the experimental propositions relating to the spin-half particle is assumed to be isomorphic to the partially ordered set of all closed subspaces of $\mathbb{C}^2$. Hence, in line with this assumption, there must be some ordering in $\mathcal{L}(\mathbb{C}^2)$, that is, there must be pairs of the subspaces, for which one subspace precedes the other, as well as pairs, in which neither subspace come first, in $\mathcal{L}(\mathbb{C}^2)$. Particularly, the partially ordered set $\mathcal{L}(\mathbb{C}^2)$ is assumed to form a lattice, namely, all pairs of the subspaces from $\mathcal{L}(\mathbb{C}^2)$ are expected to have \textit{a meet $\wedge$} represented by their intersection and \textit{a join $\vee$} represented by the smallest closed subspace of $\mathbb{C}^2$ containing their union.\\

\noindent The problem with this assumption is that \textit{it contradicts the distributive law}. To be sure, consider the following meets:\smallskip

\begin{equation} \label{one} %{Eq.6}
   \mathrm{ran}(\hat{P}_{1}^{(z)})
   \wedge
   \mathrm{ran}(\hat{P}_{1}^{(x)})
   =
   \left\{
      \!\left[
         \begingroup\SmallColSep
         \begin{array}{r}
            a \\
            0
         \end{array}
         \endgroup
      \right]\!
   \right\}
   \cap
    \left\{
      \!\left[
         \begingroup\SmallColSep
         \begin{array}{r}
            a \\
            a
         \end{array}
         \endgroup
      \right]\!
   \right\}
   =
   \{0\}
   \;\;\;\;  .
\end{equation}

\begin{equation}  %{Eq.7}
   \mathrm{ran}(\hat{P}_{2}^{(z)})
   \wedge
   \mathrm{ran}(\hat{P}_{1}^{(x)})
   =
   \left\{
      \!\left[
         \begingroup\SmallColSep
         \begin{array}{r}
            0 \\
            a
         \end{array}
         \endgroup
      \right]\!
   \right\}
   \cap
    \left\{
      \!\left[
         \begingroup\SmallColSep
         \begin{array}{r}
            a \\
            a
         \end{array}
         \endgroup
      \right]\!
   \right\}
   =
   \{0\}
   \;\;\;\;  .
\end{equation}
\smallskip

\noindent In keeping with them, the right-hand side of the distributive axiom\smallskip

\begin{equation} \label{two} %{Eq.8}
   \left(
      \mathrm{ran}(\hat{P}_{1}^{(z)})
      \vee
      \mathrm{ran}(\hat{P}_{2}^{(z)})
   \right)\!
   \wedge
   \mathrm{ran}(\hat{P}_{1}^{(x)})
   =
   \left(
      \mathrm{ran}(\hat{P}_{1}^{(z)})
      \wedge
      \mathrm{ran}(\hat{P}_{1}^{(x)})
   \right)\!
   \vee\!
   \left(
      \mathrm{ran}(\hat{P}_{2}^{(z)})
      \wedge
      \mathrm{ran}(\hat{P}_{1}^{(x)})
   \right)
   \;\;\;\;   
\end{equation}
\smallskip

\noindent must be the zero subspace, i.e., $\{0\} \vee \{0\} = \{0\}$. However, in view of\smallskip

\begin{equation}  %{Eq.9}
   \mathrm{ran}(\hat{P}_{1}^{(z)})
   \vee
   \mathrm{ran}(\hat{P}_{2}^{(z)})
   =
   \left(
      \left\{
         \!\left[
            \begingroup\SmallColSep
            \begin{array}{r}
               a \\
               0
            \end{array}
            \endgroup
         \right]\!
      \right\}^{\bot}\!\!
      \cap
       \left\{
         \!\left[
            \begingroup\SmallColSep
            \begin{array}{r}
               0 \\
               a
            \end{array}
            \endgroup
         \right]\!
      \right\}^{\bot}
   \right)^{\bot}\!\!
   =
   \left(
      \left\{
         \!\left[
            \begingroup\SmallColSep
            \begin{array}{r}
               0 \\
               a
            \end{array}
            \endgroup
         \right]\!
      \right\}
      \cap
       \left\{
         \!\left[
            \begingroup\SmallColSep
            \begin{array}{r}
               a \\
               0
            \end{array}
            \endgroup
         \right]\!
      \right\}
   \right)^{\bot}\!\!
   =
   \{0\}^{\bot}
   =
   \mathbb{C}^2
   \;\;\;\;  ,
\end{equation}
\smallskip

\noindent where $(\cdot)^{\bot}$ denotes the orthogonal complement of $(\cdot)$, and\smallskip

\begin{equation}  %{Eq.10}
   \mathbb{C}^2
   \cap
   \left\{
      \!\left[
         \begingroup\SmallColSep
         \begin{array}{r}
            a \\
            a
         \end{array}
         \endgroup
      \right]\!
   \right\}
   =
   \left\{
      \!\left[
         \begingroup\SmallColSep
         \begin{array}{r}
            a \\
            a
         \end{array}
         \endgroup
      \right]\!
   \right\}
   \;\;\;\;  ,
\end{equation}
\smallskip

\noindent one finds that (\ref{two}) does not hold for any $a \neq 0$:\smallskip

\begin{equation}  %{Eq.11}
    \left\{
      \!\left[
         \begingroup\SmallColSep
         \begin{array}{r}
            a \\
            a
         \end{array}
         \endgroup
      \right]\!
   \right\}
   =
    \left\{
      \!\left[
         \begingroup\SmallColSep
         \begin{array}{r}
            0 \\
            0
         \end{array}
         \endgroup
      \right]\!
   \right\}
   \;\;\;\;  .
\end{equation}
\smallskip

\noindent Thus, in order to maintain that the collection of the column spaces $\mathcal{L}(\mathbb{C}^2)$ is a lattice, \textit{it is necessary to give up distributivity}.\\

\noindent On the other hand, since\smallskip

\begin{equation}  %{Eq.12}
   \mathcal{U}
   =
   \underbrace{
   \left(
      \mathrm{ran}(\hat{P}_{1}^{(z)})
      \wedge
      \mathrm{ran}(\hat{P}_{1}^{(x)})
   \right)
   }_{
         \left\{
            \begin{smallmatrix}
               0
            \end{smallmatrix}
      \right\}
    }
   \vee
   \underbrace{
   \left(
      \mathrm{ran}(\hat{P}_{2}^{(z)})
      \wedge
      \mathrm{ran}(\hat{P}_{1}^{(x)})
   \right)
   }_{
         \left\{
            \begin{smallmatrix}
               0
            \end{smallmatrix}
      \right\}
    }
   =
   \{0\}
   \;\;\;\;  .
\end{equation}

\begin{equation}  %{Eq.13}
   \mathcal{V}
   =
   \underbrace{
   \bigg[
      \left(
         \mathrm{ran}(\hat{P}_{1}^{(z)})
         \wedge
         \mathrm{ran}(\hat{P}_{1}^{(x)})
      \right)
   \vee
   \mathrm{ran}(\hat{P}_{2}^{(z)})
   \bigg] 
   }_{
      \{0\}
      \,
      \vee   
      \,
      \mathrm{ran}(\hat{P}_{2}^{(z)})
      \,
      =
      \,
      \mathrm{ran}(\hat{P}_{2}^{(z)})
    }
   \wedge
   \,
   \mathrm{ran}(\hat{P}_{1}^{(x)})
   =
   \{0\}
   \;\;\;\;  ,
\end{equation}
\smallskip

\noindent for the mentioned above subspaces \textit{the modular identity} $\mathcal{U} = \mathcal{V}$ holds. So, it is possible to say that \textit{$\mathcal{L}(\mathbb{C}^2)$ forms an orthomodular lattice}. But if one adopts this proposal, a further problem will arise.\\

\noindent Indeed, due to the one-one correspondence between $\mathrm{ran}(\hat{P}_{i}^{(Q)})$ and $\hat{P}_{i}^{(Q)}$, where $i \in \{1,2\}$ and $Q \in \{z,x,y\}$, the meet (\ref{one}) corresponds to the meet\smallskip

\begin{equation}  %{Eq.14}
    \hat{P}_{1}^{(z)}\!
    \wedge
    \hat{P}_{1}^{(x)}
    =
    \hat{0}
   \;\;\;\;  ,
\end{equation}
\smallskip

\noindent which implies that the statement ``$S_z = +\textrm{\textonehalf}$ and $S_x = +\textrm{\textonehalf}$'' (where $S_Q$ denotes the spin along the $Q$-axis) is \textit{always false}. But this indicates that its negation, the statement ``$S_z = -\textrm{\textonehalf}$ or $S_x = -\textrm{\textonehalf}$'', must be \textit{always true}, which does not seem to make much physical sense.\\

\noindent A way to get around this problem is to try to find operations on the orthomodular lattice $\mathcal{L}(\mathbb{C}^2)$ that would coincide with the intersections of \textit{the compatible subspaces} (corresponding to the commutable projection operators such as $\hat{P}_{1}^{(z)}$ and $\hat{P}_{2}^{(z)}$) but would be different from the meets or even undefined on pairs of \textit{the incompatible subspaces} (corresponding to the incommutable projection operators such as $\hat{P}_{1}^{(z)}$ and $\hat{P}_{1}^{(x)}$) (see, for example, \cite{Pykacz15}, \cite{Pykacz17} and \cite{Griffiths} for details of this approach).\\

\noindent Another way to avoid the said problem is not to remove the distributive law but, instead, to falsify the assumption that the set of the column spaces, compatible and incompatible alike, forms a lattice.\\

\noindent This alternative way is realized in the presented paper.\\

\section{Invariant-subspace lattices}

\noindent Let $\{\hat{P}_{\diamond}\}$ be a set of projection operators (i.e., linear, self-adjoined, idempotent operators on a Hilbert space $\mathcal{H}$) associated with a set of propositions $\{\diamond\}$ (where the symbol $\diamond$ stands for any proposition, compound or simple) such that each operator $\hat{P}_{\diamond} \in \{\hat{P}_{\diamond}\}$ relates to only one proposition $\diamond \in \{\diamond\}$, without remainder.\\

\noindent Recall that the column space of the projection operator $\hat{P}_{\diamond}$ is the subset of the vectors $|\Psi\rangle \in \mathcal{H}$ that are in the image of $\hat{P}_{\diamond}$, namely,\smallskip

\begin{equation}  %{Eq.15}
   \mathrm{ran}(\hat{P}_{\diamond})
   =
   \left\{
      |\Psi\rangle \in \mathcal{H}
      :
      \;
      \hat{P}_{\diamond}|\Psi\rangle = |\Psi\rangle
   \right\}
   \;\;\;\;  .
\end{equation}
\smallskip

\noindent The orthogonal complement of any subspace $\mathrm{ran}(\hat{P}_{\diamond})$ is again a subspace. This complement is defined as \textit{the null space} (a.k.a. \textit{kernel}), $\mathrm{ker}(\hat{P}_{\diamond})$, i.e., the subset of the vectors $|\Psi\rangle \in \mathcal{H}$ that are mapped to zero by $\hat{P}_{\diamond}$, namely,\smallskip

\begin{equation}  %{Eq.16}
   \mathrm{ran}(\hat{P}_{\diamond})^{\bot}
   =
   \mathrm{ker}(\hat{P}_{\diamond})
   =
   \mathrm{ran}(\hat{1} - \hat{P}_{\diamond})
   =
   \left\{
      |\Psi\rangle \in \mathcal{H}
      :
      \;
      \hat{P}_{\diamond}|\Psi\rangle = 0
   \right\}
   \;\;\;\;  ,
\end{equation}
\smallskip

\noindent where the operation $\hat{1} - \hat{P}_{\diamond}$ is understood as negation of $\hat{P}_{\diamond}$, i.e., $\hat{1} - \hat{P}_{\diamond} = \neg\hat{P}_{\diamond}$. In this way, the projection operator $\hat{P}_{\diamond}$ breaks $\mathcal{H}$ into two orthogonal subspaces, namely,\smallskip

\begin{equation}  %{Eq.17}
   \mathrm{ran}(\hat{P}_{\diamond})
   +
   \mathrm{ran}(\neg\hat{P}_{\diamond})
   =
   \mathrm{ran}(\hat{1})
   =
   \mathcal{H}
   \;\;\;\;  ,
\end{equation}
\smallskip

\noindent where\smallskip

\begin{equation}  %{Eq.18}
   \mathrm{ran}(\hat{P}_{\diamond})
   \cap
   \mathrm{ran}(\neg\hat{P}_{\diamond})
   =
   \mathrm{ran}(\hat{0})
   =
   \{ 0\}
   \;\;\;\;  .
\end{equation}
\smallskip

\noindent Recall that \textit{a subspace $\mathcal{U} \subseteq \mathcal{H}$ is invariant under $\hat{P}_{\diamond}$} if\smallskip

\begin{equation}  %{Eq.19}
   |\Psi\rangle \in \mathcal{U}
   \implies
   \hat{P}_{\diamond}|\Psi\rangle \in \mathcal{U}
   \;\;\;\;  ,
\end{equation}
\smallskip

\noindent that is, $\hat{P}_{\diamond}(\mathcal{U})$ is contained in $\mathcal{U}$ and so\smallskip

\begin{equation}  %{Eq.20}
   \hat{P}_{\diamond}:
   \mathcal{U}
   \rightarrow
   \mathcal{U}
   \;\;\;\;  .
\end{equation}

\vspace*{2mm}

\begin{lemma}  %Lemma 1
The column spaces of the projection operator $\hat{P}_{\diamond}$ and its negation $\neg\hat{P}_{\diamond}$ along with the column spaces of the zero and identity operators comprise the set of the subspaces invariant under $\hat{P}_{\diamond}$, namely,\smallskip

\begin{equation} \label{three} %{Eq.21}
   \mathcal{L}(\hat{P}_{\diamond})
   =
   \left\{
      \mathrm{ran}(\hat{0})
      ,
      \mathrm{ran}(\hat{P}_{\diamond})
      ,
      \mathrm{ran}(\neg\hat{P}_{\diamond})
      ,
      \mathrm{ran}(\hat{1})
   \right\}
   \;\;\;\;  .
\end{equation}
\end{lemma}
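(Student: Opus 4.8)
The plan is to establish the set equality in~(\ref{three}) by proving the two inclusions separately. The forward inclusion --- that each of the four listed subspaces is invariant under $\hat{P}_{\diamond}$ --- is immediate. Indeed, $\mathrm{ran}(\hat{0}) = \{0\}$ and $\mathrm{ran}(\hat{1}) = \mathcal{H}$ are invariant under every operator; for any $|\Psi\rangle \in \mathrm{ran}(\hat{P}_{\diamond})$ one has $\hat{P}_{\diamond}|\Psi\rangle = |\Psi\rangle \in \mathrm{ran}(\hat{P}_{\diamond})$ by the defining property of the column space; and for any $|\Psi\rangle \in \mathrm{ran}(\neg\hat{P}_{\diamond}) = \mathrm{ker}(\hat{P}_{\diamond})$ one has $\hat{P}_{\diamond}|\Psi\rangle = 0 \in \mathrm{ran}(\neg\hat{P}_{\diamond})$. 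Hence all four subspaces satisfy the invariance condition, giving $\mathcal{L}(\hat{P}_{\diamond}) \supseteq$ the right-hand set trivially and establishing one containment.

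For the reverse inclusion I would take an arbitrary subspace $\mathcal{U}$ invariant under $\hat{P}_{\diamond}$ and exploit the orthogonal decomposition $\mathrm{ran}(\hat{P}_{\diamond}) + \mathrm{ran}(\neg\hat{P}_{\diamond}) = \mathcal{H}$ together with $\mathrm{ran}(\hat{P}_{\diamond}) \cap \mathrm{ran}(\neg\hat{P}_{\diamond}) = \{0\}$ recalled above. The key step is to observe that invariance forces $\mathcal{U}$ to respect this splitting: for any $|\Psi\rangle \in \mathcal{U}$, write $|\Psi\rangle = \hat{P}_{\diamond}|\Psi\rangle + (\hat{1} - \hat{P}_{\diamond})|\Psi\rangle$; since $\hat{P}_{\diamond}|\Psi\rangle \in \mathcal{U}$ by invariance, the complementary component $(\hat{1} - \hat{P}_{\diamond})|\Psi\rangle = |\Psi\rangle - \hat{P}_{\diamond}|\Psi\rangle$ lies in $\mathcal{U}$ as well. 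Consequently every invariant subspace decomposes as $\mathcal{U} = \big(\mathcal{U} \cap \mathrm{ran}(\hat{P}_{\diamond})\big) + \big(\mathcal{U} \cap \mathrm{ran}(\neg\hat{P}_{\diamond})\big)$, so that $\mathcal{U}$ is completely determined by a subspace of the column space together with a subspace of the null space.

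It then remains to enumerate the possibilities for $\mathcal{U} \cap \mathrm{ran}(\hat{P}_{\diamond})$ and $\mathcal{U} \cap \mathrm{ran}(\neg\hat{P}_{\diamond})$, and I expect this to be the main obstacle. It is here that the spin-half setting enters decisively: in $\mathcal{H} = \mathbb{C}^2$ a nontrivial $\hat{P}_{\diamond}$ has both $\mathrm{ran}(\hat{P}_{\diamond})$ and $\mathrm{ran}(\neg\hat{P}_{\diamond})$ one-dimensional, so each intersection can only be $\{0\}$ or the whole one-dimensional factor. The four admissible combinations then reproduce exactly $\{0\}$, $\mathrm{ran}(\hat{P}_{\diamond})$, $\mathrm{ran}(\neg\hat{P}_{\diamond})$ and $\mathcal{H}$, closing the reverse inclusion. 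I would flag explicitly that this final enumeration is where any generalization runs into trouble: as soon as $\mathrm{ran}(\hat{P}_{\diamond})$ or $\mathrm{ran}(\neg\hat{P}_{\diamond})$ has dimension exceeding one, that factor acquires its own nontrivial subspaces, and the decomposition above manufactures invariant subspaces outside the list~(\ref{three}). The statement as written should therefore be understood with the minimality of the one-dimensional range and null space of the spin-half projectors in force, and I would either add this hypothesis or restrict the claim to $\mathbb{C}^2$ to make the reverse inclusion correct.
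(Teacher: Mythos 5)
Your first paragraph is, in substance, the paper's \emph{entire} proof: the paper verifies that the four listed subspaces are invariant, using exactly the same three observations (fixed vectors of $\hat{P}_{\diamond}$ stay in $\mathrm{ran}(\hat{P}_{\diamond})$, vectors of the kernel map to $0$, and $\{0\}$, $\mathcal{H}$ are trivially invariant), and then stops, asserting the equality (\ref{three}). The reverse inclusion --- that an arbitrary invariant subspace must be one of the four --- is nowhere argued in the paper, and your second paragraph supplies precisely that missing piece. Your splitting argument is correct: for $|\Psi\rangle\in\mathcal{U}$, invariance puts $\hat{P}_{\diamond}|\Psi\rangle$ in $\mathcal{U}\cap\mathrm{ran}(\hat{P}_{\diamond})$, hence $|\Psi\rangle-\hat{P}_{\diamond}|\Psi\rangle\in\mathcal{U}\cap\mathrm{ran}(\neg\hat{P}_{\diamond})$, so $\mathcal{U}=\bigl(\mathcal{U}\cap\mathrm{ran}(\hat{P}_{\diamond})\bigr)+\bigl(\mathcal{U}\cap\mathrm{ran}(\neg\hat{P}_{\diamond})\bigr)$.

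Your closing caveat is not a defect of your argument but of the lemma itself, and you are right to flag it. Since, conversely, $U+V$ is invariant under $\hat{P}_{\diamond}$ for \emph{every} pair of subspaces $U\subseteq\mathrm{ran}(\hat{P}_{\diamond})$ and $V\subseteq\mathrm{ran}(\neg\hat{P}_{\diamond})$, the set $\mathcal{L}(\hat{P}_{\diamond})$ coincides with the four-element list (\ref{three}) if and only if both $\mathrm{ran}(\hat{P}_{\diamond})$ and $\mathrm{ran}(\neg\hat{P}_{\diamond})$ have dimension at most one, i.e.\ essentially only for a rank-one projector on $\mathbb{C}^2$. Concretely, for $\hat{P}_{\diamond}=\mathrm{diag}(1,1,0)$ on $\mathbb{C}^3$, the line spanned by $(1,0,0)^{T}$ is invariant yet absent from (\ref{three}). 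Note, however, that the repair you propose (adding the one-dimensionality hypothesis or restricting to $\mathbb{C}^2$) cannot be made harmlessly: the paper subsequently uses this lemma, through the intersection formula (\ref{four}) and the description (\ref{five}) of $\mathcal{L}(\Sigma^{(Q)})$, and again in the Burnside argument, for projectors of a maximal context on a Hilbert space of arbitrary finite dimension, where the kernels $\mathrm{ran}(\neg\hat{P}_i^{(Q)})$ have dimension $N-1\geq 2$. In short, your proof is correct and complete where the paper's establishes only one inclusion of an equality that, as stated, fails in general.
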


\begin{proof}
Let $|\Psi\rangle \in \mathrm{ran}(\hat{P}_{\diamond})$. Since $\hat{P}_{\diamond}|\Psi\rangle = |\Psi\rangle$, $\hat{P}_{\diamond}|\Psi\rangle \in \mathrm{ran}(\hat{P}_{\diamond})$, and so $\hat{P}_{\diamond}: \mathrm{ran}(\hat{P}_{\diamond}) \rightarrow  \mathrm{ran}(\hat{P}_{\diamond})$. Similarly, let $|\Psi\rangle \in \mathrm{ran}(\neg\hat{P}_{\diamond})$. This means that $\hat{P}_{\diamond}|\Psi\rangle = 0$. On the other hand, $0 \in \mathrm{ran}(\neg\hat{P}_{\diamond})$, which implies $\hat{P}_{\diamond}: \mathrm{ran}(\neg\hat{P}_{\diamond}) \rightarrow  \mathrm{ran}(\neg\hat{P}_{\diamond})$. Furthermore, the space $\mathcal{H}$ itself as well as the zero subspace $\{0\}$ are \textit{the trivially invariant subspaces} for any projection operator $\hat{P}_{\diamond}$.
\end{proof}

\vspace*{2mm}

\noindent Let us identify a nonempty set $\Sigma^{(Q)} = \{\hat{P}_i^{(Q)}\}_{i=1}$ as \textit{a maximal context} if all the projection operators $\hat{P}_i^{(Q)} \in \Sigma^{(Q)}$ are orthogonal to each other and resolve to the identity operator, that is,\smallskip

\begin{equation}  %{Eq.22}
   \hat{P}_i^{(Q)}, \hat{P}_j^{(Q)} \in \Sigma^{(Q)}, i \neq j:
   \implies
   \hat{P}_i^{(Q)} \hat{P}_j^{(Q)}
   =
   \hat{P}_j^{(Q)} \hat{P}_i^{(Q)}
   =
   \hat{0}
   \;\;\;\;  ,
\end{equation}

\begin{equation}  %{Eq.23}
   \sum_{i=1} \hat{P}_i^{(Q)}
   =
   \hat{1}
   \;\;\;\;  .
\end{equation}
\smallskip

\noindent Let \textit{the set of the invariant subspaces for the maximal context $\mathcal{L}(\Sigma^{(Q)})$} be defined as the set of the subspaces invariant under each $\hat{P}_i^{(Q)} \in \Sigma^{(Q)}$, i.e.,\smallskip

\begin{equation}  %{Eq.24}
   \mathcal{L}(\Sigma^{(Q)})
   =
   \bigcap_{i=1}
      \mathcal{L}(\hat{P}_i^{(Q)})
   \;\;\;\;  .
\end{equation}
\smallskip

\noindent Consider $\mathcal{L}(\hat{P}_i^{(Q)}) \cap \mathcal{L}(\hat{P}_j^{(Q)})$: In keeping with (\ref{three}), it must be the intersection as follows:\smallskip

\begin{equation} \label{four} %{Eq.25}
   \mathcal{L}(\hat{P}_i^{(Q)})
   \cap
   \mathcal{L}(\hat{P}_j^{(Q)})
   =
   \mathrm{ran}(\hat{R}_i)
   \cap
   \mathrm{ran}(\hat{R}_j)
   =
   \mathrm{ran}(\hat{R}_i \hat{R}_j)
   \;\;\;\;  ,
\end{equation}
\smallskip

\noindent where $\hat{R}_i = \{\hat{0}, \hat{P}_i^{(Q)}, \neg\hat{P}_i^{(Q)}, \hat{1}\}$ and $\hat{R}_j = \{\hat{0}, \hat{P}_j^{(Q)}, \neg\hat{P}_j^{(Q)}, \hat{1}\}$. Accordingly, the set $\mathcal{L}(\Sigma^{(Q)})$ can be written down as\smallskip

\begin{equation} \label{five} %{Eq.26}
   \mathcal{L}(\Sigma^{(Q)})
   =
   \mathrm{ran}\left(\prod_{i=1} \hat{R_i} \right)
   =
   \left\{
      \{0\}
      ,
      \dots
      ,
      \mathrm{ran}(\hat{P}_i^{(Q)})
      ,
      \dots
      ,
      \mathrm{ran}\left(\sum_{i}^{\alpha}\hat{P}_i^{(Q)}\right)
      ,
      \dots
      ,
      \mathcal{H}
   \right\}
   \;\;\;\;  ,
\end{equation}
\smallskip

\noindent where $\alpha \in \{1,\dim(\Sigma^{(Q)})-1\}$.

\vspace*{3mm}

\begin{theorem} %Theorem 1
The set of the invariant subspaces invariant under each projection operator from the maximal context on the finite Hilbert space $\mathcal{H}$ forms a lattice (called invariant-subspace lattice of the maximal context).
\end{theorem}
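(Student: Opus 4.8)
The plan is to exploit the special structure of a maximal context. Because the projectors $\hat{P}_i^{(Q)} \in \Sigma^{(Q)}$ are mutually orthogonal and resolve to the identity, the Hilbert space splits as the orthogonal direct sum $\mathcal{H} = \bigoplus_i \mathrm{ran}(\hat{P}_i^{(Q)})$, and finiteness of $\mathcal{H}$ guarantees that the index set $I = \{1,\dots,\dim(\Sigma^{(Q)})\}$ is finite. The first step I would record is that, for every subset $S \subseteq I$, the partial sum $\hat{P}_S = \sum_{i \in S}\hat{P}_i^{(Q)}$ is again a projection operator (its idempotency and self-adjointness follow at once from the orthogonality relations), with $\mathrm{ran}(\hat{P}_S) = \bigoplus_{i \in S}\mathrm{ran}(\hat{P}_i^{(Q)})$. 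By the Lemma together with~(\ref{five}), every element of $\mathcal{L}(\Sigma^{(Q)})$ is of this form, with $S = \varnothing$ giving $\{0\}$ and $S = I$ giving $\mathcal{H}$; conversely each such subspace is invariant under every $\hat{P}_i^{(Q)}$. Hence the assignment $S \mapsto \mathrm{ran}(\hat{P}_S)$ is a bijection of the power set $2^I$ onto $\mathcal{L}(\Sigma^{(Q)})$.

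Next I would equip $\mathcal{L}(\Sigma^{(Q)})$ with the inclusion order $\subseteq$, which is reflexive, antisymmetric and transitive, so that $(\mathcal{L}(\Sigma^{(Q)}),\subseteq)$ is a partially ordered set; moreover $\mathrm{ran}(\hat{P}_S) \subseteq \mathrm{ran}(\hat{P}_T)$ precisely when $S \subseteq T$, so the bijection above respects the orders. To promote the poset to a lattice I would verify that the meet (intersection of subspaces) and the join (the smallest closed subspace containing the union) of any two members remain inside $\mathcal{L}(\Sigma^{(Q)})$. Using the direct-sum decomposition I would establish the two closure identities
\[
   \mathrm{ran}(\hat{P}_S) \cap \mathrm{ran}(\hat{P}_T) = \mathrm{ran}(\hat{P}_{S \cap T}),
   \qquad
   \mathrm{ran}(\hat{P}_S) \vee \mathrm{ran}(\hat{P}_T) = \mathrm{ran}(\hat{P}_{S \cup T}),
\]
the first because a vector lying in both direct sums can only have nonzero components in the indices common to $S$ and $T$, the second because the span of the two direct sums is exactly the direct sum over $S \cup T$. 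Equivalently, as already noted in~(\ref{four}), the commuting partial-sum projectors satisfy $\mathrm{ran}(\hat{P}_S) \cap \mathrm{ran}(\hat{P}_T) = \mathrm{ran}(\hat{P}_S\hat{P}_T) = \mathrm{ran}(\hat{P}_{S \cap T})$.

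With the two identities in hand, both $S \cap T$ and $S \cup T$ again index members of $\mathcal{L}(\Sigma^{(Q)})$, so every pair of subspaces in the set possesses a greatest lower bound and a least upper bound inside the set; hence $\mathcal{L}(\Sigma^{(Q)})$ is a lattice --- in fact order-isomorphic, via $S \mapsto \mathrm{ran}(\hat{P}_S)$, to the finite Boolean lattice $2^I$. The step I expect to be the main obstacle is the join identity: one must show that forming the smallest closed subspace containing $\mathrm{ran}(\hat{P}_S) \cup \mathrm{ran}(\hat{P}_T)$ yields exactly $\mathrm{ran}(\hat{P}_{S \cup T})$ and nothing larger. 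This is where orthogonality and completeness of the context are indispensable --- they make each partial sum a genuine projector and keep the eigenspaces linearly independent, so the span acquires no spurious vectors. By contrast, for incommutable projectors the analogous join overshoots, as the introductory spin-half computations illustrate, which is precisely the failure the paper is driving toward.
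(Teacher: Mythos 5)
Your proof is correct, and although it stands on the same foundation as the paper's own proof --- the enumeration (\ref{five}) of $\mathcal{L}(\Sigma^{(Q)})$ as the ranges of partial sums of the context projectors --- its execution is genuinely different. The paper verifies closure under meet and join by enumerating cases ($\{0\}$ against a partial sum, a partial sum against $\mathrm{ran}(\hat{1})$, two generic partial sums with an imprecise ``$i=j$ versus $i\neq j$'' split) and computes the join of two partial sums through the orthocomplement identity $\mathcal{U}\vee\mathcal{V}=(\mathcal{U}^{\bot}\cap\mathcal{V}^{\bot})^{\bot}$, invoking finiteness of $\mathcal{H}$ to reduce joins back to meets. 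You instead parametrize the whole set by subsets $S$ of the index set (absorbing the trivial elements as $S=\varnothing$ and $S=I$) and prove the two closure identities $\mathrm{ran}(\hat{P}_S)\cap\mathrm{ran}(\hat{P}_T)=\mathrm{ran}(\hat{P}_{S\cap T})$ and $\mathrm{ran}(\hat{P}_S)\vee\mathrm{ran}(\hat{P}_T)=\mathrm{ran}(\hat{P}_{S\cup T})$ uniformly from the orthogonal decomposition of $\mathcal{H}$, handling the join by a span argument rather than by complementation. Your route buys two things: the paper's vague $i=j$/$i\neq j$ case split becomes the precise index-set intersection $S\cap T$, and the order isomorphism with the power set $2^{I}$ shows the lattice is Boolean, so the paper's Theorem 2 (distributivity) follows at no extra cost; what the paper's route buys is mainly continuity with its later machinery, since the orthocomplement calculus of (\ref{four}) is reused in the proof of Theorem 2. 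Two minor caveats: injectivity of $S\mapsto\mathrm{ran}(\hat{P}_S)$ (hence your bijection claim) requires every $\hat{P}_i^{(Q)}\neq\hat{0}$, though only closure, not the bijection, is needed for the lattice conclusion; and both your argument and the paper's inherit from (\ref{five}) (that is, from Lemma 1) the premise that the partial-sum ranges exhaust $\mathcal{L}(\Sigma^{(Q)})$, which is true only when the context projectors have rank one --- otherwise any proper subspace of a single $\mathrm{ran}(\hat{P}_i^{(Q)})$ is also invariant under every member of the context, so neither proof would have checked all the required pairs.
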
 

\begin{proof}
To prove that the set $\mathcal{L}(\Sigma^{(Q)})$ is \textit{a meet-semilattice}, it must be shown that any two elements from the set $\mathcal{L}(\Sigma^{(Q)})$ has a meet which is an element from $\mathcal{L}(\Sigma^{(Q)})$, i.e.,\noindent

\begin{equation}  %{Eq.27}
   \mathcal{U},\mathcal{V} \in \mathcal{L}(\Sigma^{(Q)})
   \implies
   \mathcal{U} \wedge \mathcal{V} \in \mathcal{L}(\Sigma^{(Q)})
   \;\;\;\;  .
\end{equation}
\smallskip

\noindent This derives directly from the follows\smallskip

\begin{equation}  %{Eq.28}
   \{0\}
   \cap
   \mathrm{ran}
   (
      \sum_j^\beta \hat{P}_j^{(Q)}
   )
   =
   \{0\}
   \in \mathcal{L}(\Sigma^{(Q)})
   \;\;\;\;  ,
\end{equation}

\begin{equation}  %{Eq.29}
   \mathrm{ran}
   (
      \sum_i^\alpha \hat{P}_i^{(Q)}
   )
   \cap
   \{0\}
   =
   \{0\}
   \in \mathcal{L}(\Sigma^{(Q)})
   \;\;\;\;  ,
\end{equation}

\begin{equation}  %{Eq.30}
   \mathrm{ran}
   (
      \sum_i^\alpha \hat{P}_i^{(Q)}
   )
   \cap
   \mathrm{ran}
   (
      \sum_j^\beta \hat{P}_j^{(Q)}
   )
   =
   \left\{
      \begin{array}{r}
          \mathrm{ran}(\sum_k^\gamma \hat{P}_k^{(Q)})
          \in
          \mathcal{L}(\Sigma^{(Q)}),
          \,\,
          i = j
          \\
          \{0\}
          \in
          \mathcal{L}(\Sigma^{(Q)}),
          \,\,
          i \neq j
      \end{array}
   \right.
   \;\;\;\;  ,
\end{equation}

\begin{equation}  %{Eq.31}
   \mathrm{ran}
   (
      \sum_i^\alpha \hat{P}_i^{(Q)}
   )
   \cap
   \mathrm{ran}(\hat{1})
   =
   \mathrm{ran}
   (
      \sum_i^\alpha \hat{P}_i^{(Q)}
   )
   \in \mathcal{L}(\Sigma^{(Q)})
   \;\;\;\;  ,
\end{equation}

\begin{equation}  %{Eq.32}
   \mathrm{ran}(\hat{1})
   \cap
   \mathrm{ran}
   (
      \sum_i^\beta \hat{P}_j^{(Q)}
   )
   =
   \mathrm{ran}
   (
     \sum_i^\beta \hat{P}_j^{(Q)}
   )
   \in \mathcal{L}(\Sigma^{(Q)})
   \;\;\;\;  ,
\end{equation}
\smallskip

\noindent where $\beta,\gamma \in \{1,\dim(\Sigma^{(Q)})-1\}$.\\

\noindent To prove that the set $\mathcal{L}(\Sigma^{(Q)})$ is \textit{a join-semilattice}, let us show that any two subspaces from $\mathcal{L}(\Sigma^{(Q)})$ has a join which is an element from $\mathcal{L}(\Sigma^{(Q)})$, namely,\smallskip

\begin{equation}  %{Eq.33}
   \mathcal{U},\mathcal{V} \in \mathcal{L}(\Sigma^{(Q)})
   \implies
   \mathcal{U} \vee \mathcal{V} \in \mathcal{L}(\Sigma^{(Q)})
   \;\;\;\;  .
\end{equation}
\smallskip

\noindent It is clear that\smallskip

\begin{equation}  %{Eq.34}
   \{0\}
   \vee
   \mathrm{ran}
   (
      \sum_j^\beta \hat{P}_j^{(Q)}
   )
   =
   \mathrm{ran}
   (
      \sum_j^\beta \hat{P}_j^{(Q)}
   )
   \in \mathcal{L}(\Sigma^{(Q)})
   \;\;\;\;  ,
\end{equation}

\begin{equation}  %{Eq.35}
   \mathrm{ran}
   (
      \sum_i^\alpha \hat{P}_i^{(Q)}
   )
   \vee
   \{0\}
   =
   \mathrm{ran}
   (
      \sum_i^\alpha \hat{P}_i^{(Q)}
   )
   \in \mathcal{L}(\Sigma^{(Q)})
   \;\;\;\;  ,
\end{equation}

\begin{equation}  %{Eq.36}
   \mathrm{ran}
   (
      \sum_i^\alpha \hat{P}_i^{(Q)}
   )
   \vee
   \mathrm{ran}(\hat{1})
   =
   \mathrm{ran}(\hat{1})
   \in \mathcal{L}(\Sigma^{(Q)})
   \;\;\;\;  ,
\end{equation}

\begin{equation}  %{Eq.37}
   \mathrm{ran}(\hat{1})
   \vee
   \mathrm{ran}
   (
      \sum_i^\beta \hat{P}_j^{(Q)}
   )
   =
   \mathrm{ran}(\hat{1})
   \in \mathcal{L}(\Sigma^{(Q)})
   \;\;\;\;  ,
\end{equation}
\smallskip

\noindent Consider $\mathrm{ran}(\sum_i^\alpha \hat{P}_i^{(Q)}) \vee \mathrm{ran}(\sum_j^\beta \hat{P}_j^{(Q)})$. Because the Hilbert space $\mathcal{H}$ is finite, one can write\smallskip

\begin{equation}  %{Eq.38}
   \mathrm{ran}(\sum_i^\alpha \hat{P}_i^{(Q)})
   \vee
   \mathrm{ran}(\sum_j^\beta \hat{P}_j^{(Q)})
   =
   \left[
      \mathrm{ran}(\sum_k^{N -\alpha} \hat{P}_k^{(Q)})
      \cap
      \mathrm{ran}(\sum_l^{N -\beta} \hat{P}_l^{(Q)})
   \right]^{\bot}
   \;\;\;\;  ,
\end{equation}
\smallskip

\noindent where $N = \dim(\Sigma^{(Q)})$. In accordance with (\ref{four}), it follows\smallskip

\begin{equation}  %{Eq.39}
   \left[
      \mathrm{ran}(\sum_k^{N -\alpha} \hat{P}_k^{(Q)})
      \cap
      \mathrm{ran}(\sum_l^{N -\beta} \hat{P}_l^{(Q)})
   \right]^{\bot}
   =
   \mathrm{ran}
      \left(
         \sum_i^N \hat{P}_i^{(Q)}
         -
         \sum_k^{N -\alpha} \hat{P}_k^{(Q)}
         \sum_l^{N -\beta} \hat{P}_l^{(Q)}
      \right)
   \in \mathcal{L}(\Sigma^{(Q)})
   \;\;\;\;  ,
\end{equation}
\smallskip

\noindent and so, the set $\mathcal{L}(\Sigma^{(Q)})$ is both a meet- and a join-semilattice.
\end{proof}

\vspace*{2mm}

\begin{theorem} %Theorem 2
The invariant-subspace lattice of the maximal context on the finite-dimensional Hilbert space is distributive.
\end{theorem}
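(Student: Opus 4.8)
The plan is to show that $\mathcal{L}(\Sigma^{(Q)})$ is isomorphic, as a lattice, to the Boolean algebra of subsets of the index set $\{1,\dots,N\}$ (with $N=\dim(\Sigma^{(Q)})$), and then to transport the distributive law from that Boolean algebra. The starting point is the description already obtained in (\ref{five}): since the operators of the maximal context are mutually orthogonal and resolve the identity, they furnish an orthogonal direct-sum decomposition $\mathcal{H}=\bigoplus_{i=1}^{N}V_i$ with $V_i=\mathrm{ran}(\hat{P}_i^{(Q)})$, and every element of $\mathcal{L}(\Sigma^{(Q)})$ is a ``coordinate subspace'' $\mathrm{ran}(\sum_{i\in S}\hat{P}_i^{(Q)})=\bigoplus_{i\in S}V_i$ for some $S\subseteq\{1,\dots,N\}$, with $S=\emptyset$ giving $\{0\}$ and $S=\{1,\dots,N\}$ giving $\mathcal{H}$.

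\noindent First I would introduce the map
\begin{equation}
   \phi:\; S \;\longmapsto\; \mathrm{ran}\!\left(\sum_{i\in S}\hat{P}_i^{(Q)}\right)=\bigoplus_{i\in S}V_i
   \;\;\;\; ,
\end{equation}
from the power set $2^{\{1,\dots,N\}}$ to $\mathcal{L}(\Sigma^{(Q)})$, and verify that it is a bijection: surjectivity is exactly the content of (\ref{five}), while injectivity follows because the $V_i$ are nonzero and mutually orthogonal, so distinct index sets yield distinct subspaces. Next I would check that $\phi$ respects both lattice operations. For the meet this is essentially (\ref{four}): a vector lying in both $\bigoplus_{i\in S}V_i$ and $\bigoplus_{j\in T}V_j$ can have nonzero components only at indices common to $S$ and $T$, whence $\phi(S)\wedge\phi(T)=\bigoplus_{k\in S\cap T}V_k=\phi(S\cap T)$. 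For the join I would reuse the orthocomplement formula established in the proof of Theorem~1: since orthocomplementation sends $\bigoplus_{i\in S}V_i$ to $\bigoplus_{i\notin S}V_i$, i.e. $\phi(S)^{\bot}=\phi(S^{c})$, the De~Morgan step gives $\phi(S)\vee\phi(T)=[\,\phi(S)^{\bot}\cap\phi(T)^{\bot}\,]^{\bot}=\phi\big((S^{c}\cap T^{c})^{c}\big)=\phi(S\cup T)$, using the meet identity just verified. Thus $\phi$ is a lattice isomorphism onto $(2^{\{1,\dots,N\}},\cap,\cup)$.

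\noindent The proof would then close by transport: for any subsets $A,B,C$ the set identity $A\cap(B\cup C)=(A\cap B)\cup(A\cap C)$ holds, so applying $\phi^{-1}$ to three arbitrary elements of $\mathcal{L}(\Sigma^{(Q)})$ and invoking that $\phi$ preserves $\wedge$ and $\vee$ yields
\begin{equation}
   \mathcal{U}\wedge(\mathcal{V}\vee\mathcal{W})
   =
   (\mathcal{U}\wedge\mathcal{V})\vee(\mathcal{U}\wedge\mathcal{W})
   \;\;\;\; ,
\end{equation}
together with its dual, which establishes distributivity. The main obstacle I anticipate is not the transport step, which is routine once the isomorphism is in hand, but the careful justification that $\phi$ preserves joins: this rests on the orthocomplement identity $(\mathcal{X}^{\bot})^{\bot}=\mathcal{X}$ and on each coordinate subspace $\phi(S)$ having its complementary coordinate subspace $\phi(S^{c})$ again inside $\mathcal{L}(\Sigma^{(Q)})$ — both of which hold precisely because $\mathcal{H}$ is finite-dimensional, exactly as the hypothesis of the theorem demands.
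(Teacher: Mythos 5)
Your proof is correct, but it takes a genuinely different route from the paper's. The paper argues at the level of operators: it writes three generic lattice elements as $\mathrm{ran}(\hat{S})$, $\mathrm{ran}(\hat{T})$, $\mathrm{ran}(\hat{W})$ with $\hat{S},\hat{T},\hat{W}$ commuting sums of context projections, converts meets into ranges of products via (\ref{four}), converts joins into ranges of sums via the same finite-dimensional orthocomplement identity you invoke, and then checks that both sides of the distributive law reduce to the single expression $\mathrm{ran}(\hat{S}\hat{T}+\hat{S}\hat{W})$. You instead identify the whole lattice structurally: the map $\phi\colon S\mapsto\bigoplus_{i\in S}V_i$ is a lattice isomorphism from the power set $2^{\{1,\dots,N\}}$ onto $\mathcal{L}(\Sigma^{(Q)})$, and distributivity is transported from the Boolean algebra of sets. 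Both arguments consume exactly the same two inputs --- the enumeration (\ref{five}) of the lattice elements as coordinate subspaces, and finite-dimensionality for the De Morgan step --- so neither is more general, and both stand or fall with (\ref{five}). What your route buys is a stronger conclusion (the lattice is a finite Boolean algebra, hence complemented, atomistic, and distributive together with the dual law for free) and it sidesteps the paper's slightly delicate operator bookkeeping, in which expressions such as $\mathrm{ran}(\hat{T}+\hat{W})$ and $\mathrm{ran}(\hat{S}\hat{T}+\hat{S}\hat{W})$ involve operators that need not be projections and must be read as ranges of positive operators (equivalently, sums of the individual ranges); what the paper's route buys is brevity, since it verifies the one needed identity without setting up a bijection. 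Two small remarks on your write-up: injectivity of $\phi$ requires each $\hat{P}_i^{(Q)}\neq\hat{0}$, which is harmless but worth stating; and the join-preservation step you single out as the main obstacle can be done without orthocomplements at all, since the smallest subspace containing $\bigl(\bigoplus_{i\in S}V_i\bigr)\cup\bigl(\bigoplus_{j\in T}V_j\bigr)$ is plainly $\bigoplus_{k\in S\cup T}V_k$ --- the De Morgan detour is needed only if one insists on mirroring the paper's definition of the join via complements.
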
 

\begin{proof}
Observe in (\ref{five}) that every three elements of the lattice $\mathcal{L}(\Sigma^{(Q)})$ can be presented as $\mathrm{ran}(\hat{S})$, $\mathrm{ran}(\hat{T})$ and $\mathrm{ran}(\hat{W})$ where the following denotements are used: $\hat{S} = \{\hat{0}, \sum_{i}^\alpha \!\!\hat{P}_i^{(Q)}\!, \hat{1}\}$, $\hat{T} = \{\hat{0}, \sum_{j}^\beta \!\!\hat{P}_j^{(Q)}\!, \hat{1}\}$ and $\hat{W} = \{\hat{0}, \sum_{k}^\gamma \!\!\hat{P}_k^{(Q)}\!, \hat{1}\}$. Consider the equality $\mathcal{U} = \mathcal{V}$ where $\mathcal{U}$ and $\mathcal{V}$ stand for\smallskip

\begin{equation}  %{Eq.40}
   \mathcal{U}
   =
   \left(
      \mathrm{ran}(\hat{S})
      \wedge
      \mathrm{ran}(\hat{T})
   \right)
   \vee
   \left(
      \mathrm{ran}(\hat{S})
      \wedge
      \mathrm{ran}(\hat{W})
   \right)
   \;\;\;\;  ,
\end{equation}

\begin{equation}  %{Eq.41}
   \mathcal{V}
   =
   \mathrm{ran}(\hat{S})
   \wedge
   \left(
      \mathrm{ran}(\hat{T})
      \vee
      \mathrm{ran}(\hat{W})
   \right)
   \;\;\;\;  .
\end{equation}
\smallskip

\noindent According to (\ref{four}), $\mathcal{U} = \mathrm{ran}(\hat{S}\hat{T}) \vee \mathrm{ran}(\hat{S}\hat{W})$. As $\mathcal{H}$ is finite, one can write\smallskip

\begin{equation}  %{Eq.42}
   \mathcal{U}
   =
   \left(
      \mathrm{ran}(\hat{1} - \hat{S}\hat{T})
      \cap
      \mathrm{ran}(\hat{1} - \hat{S}\hat{W})
   \right)^{\bot}\!
   =
   \mathrm{ran}(\hat{S}\hat{T} + \hat{S}\hat{W})
   \;\;\;\;  .
\end{equation}
\smallskip

\noindent In the same way, $\mathcal{V}=\mathrm{ran}(\hat{S}) \cap \mathrm{ran}(\hat{T}+\hat{W}) = \mathrm{ran}(\hat{S}\hat{T} + \hat{S}\hat{W})$. Thus, the equality $\mathcal{U} = \mathcal{V}$ holds, which means that the lattice $\mathcal{L}(\Sigma^{(Q)})$ satisfies the distributivity axiom.
\end{proof}

\vspace*{2mm}

\begin{corollary} %Corollary 1
The set of projection operators generated by the maximal context on the finite-dimensional Hilbert space, namely,\smallskip

\begin{equation}  %{Eq.43}
   L(\Sigma^{(Q)})
   =
   \left\{
      \hat{0}
      ,
      \dots
      ,
      \sum_i^\alpha \hat{P}_i^{(Q)}
      ,
      \dots
      ,
      \hat{1}
   \right\}
   \;\;\;\;  ,
\end{equation}
\smallskip

\noindent forms a distributive lattice in which the meets of the projection operators are defined by\smallskip

\begin{equation} \label{six} %{Eq.44}
   \mathrm{ran}(\hat{P}_i^{(Q)})
   \wedge
   \mathrm{ran}(\hat{P}_j^{(Q)})
   =
   \{0\}
   \,
   \implies
   \,
   \hat{P}_i^{(Q)}
   \wedge
   \hat{P}_j^{(Q)}
   =
   \hat{0}
   \;\;\;\;  .
\end{equation}
\end{corollary}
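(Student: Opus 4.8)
The plan is to derive the corollary directly from Theorems 1 and 2 by exploiting the one-to-one correspondence between projection operators and their column spaces. First I would verify that every element of $L(\Sigma^{(Q)})$ is genuinely a projection operator: because the operators $\hat{P}_i^{(Q)} \in \Sigma^{(Q)}$ are mutually orthogonal, any partial sum $\sum_i^\alpha \hat{P}_i^{(Q)}$ is self-adjoint and idempotent (the cross terms $\hat{P}_i^{(Q)}\hat{P}_j^{(Q)}$ with $i \neq j$ vanish), so $L(\Sigma^{(Q)})$ consists of projection operators whose ranges are exactly the subspaces listed in (\ref{five}).

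Next I would set up the range map $\hat{S} \mapsto \mathrm{ran}(\hat{S})$ from $L(\Sigma^{(Q)})$ to $\mathcal{L}(\Sigma^{(Q)})$ and argue that it is a bijection: distinct partial sums project onto distinct subspaces, and by construction every element of $\mathcal{L}(\Sigma^{(Q)})$ is the range of such a sum. I would then check that this map preserves order in both directions, i.e. that $\mathrm{ran}(\hat{S}) \subseteq \mathrm{ran}(\hat{T})$ holds if and only if $\hat{S}\hat{T} = \hat{S}$, which identifies the operator ordering with subspace inclusion. Consequently the range map is an isomorphism of partially ordered sets.

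With the isomorphism in hand, the lattice operations transport verbatim. By (\ref{four}) the intersection $\mathrm{ran}(\hat{S}) \cap \mathrm{ran}(\hat{T})$ equals $\mathrm{ran}(\hat{S}\hat{T})$, so the meet of two elements of $L(\Sigma^{(Q)})$ is the orthogonal projection $\hat{S}\hat{T}$, again a member of $L(\Sigma^{(Q)})$; dually, the join is obtained through the orthogonal-complement construction $\hat{1} - \hat{S}$ as in Theorem 1. Since $\mathcal{L}(\Sigma^{(Q)})$ is a distributive lattice by Theorems 1 and 2, its isomorphic image $L(\Sigma^{(Q)})$ is a distributive lattice as well. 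In particular, pulling the meet back along the correspondence yields exactly the defining relation (\ref{six}): whenever $\mathrm{ran}(\hat{P}_i^{(Q)}) \wedge \mathrm{ran}(\hat{P}_j^{(Q)}) = \{0\}$, the meet of the operators is the unique projection with range $\{0\}$, namely $\hat{0}$.

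The main obstacle, as I see it, lies in upgrading the correspondence from a mere set bijection to a rigorous lattice isomorphism. The delicate point is to confirm that the algebraic operations on operators, the products $\hat{S}\hat{T}$ for meets and the complement construction for joins, really implement the order-theoretic meet and join, and that they stay inside $L(\Sigma^{(Q)})$. Everything here hinges on the mutual orthogonality of the $\hat{P}_i^{(Q)}$, which guarantees that products of partial sums are themselves partial sums and that complements correspond to the remaining projections; without orthogonality the products need not be projections at all, and the clean formula (\ref{six}) would fail.
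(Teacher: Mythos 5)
Your proposal is correct and takes the route the paper itself relies on: the paper states this corollary with no explicit proof, treating it as immediate from Theorems 1 and 2 together with the one-to-one correspondence between projection operators and their column spaces, which is exactly the correspondence you formalize. Your added details---that partial sums of mutually orthogonal projections are again projections, that the range map $\hat{S} \mapsto \mathrm{ran}(\hat{S})$ is an order isomorphism onto $\mathcal{L}(\Sigma^{(Q)})$, and that meets and joins transport along it---are precisely the steps the paper leaves implicit, so your argument fills in rather than departs from the paper's reasoning.
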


\vspace*{2mm}

\noindent This meet implies that any pair of the propositions $Q_i$ and $Q_j$ associated with the different projection operators $\hat{P}_i^{(Q)}$ and $\hat{P}_j^{(Q)}$ from the maximal context cannot be true together, i.e., $Q_i \wedge Q_j = \bot$ (where $\bot$ denotes \textit{a contradiction}, i.e., a statement that is always false).\\

\section{Burnside's theorem}

\noindent Let $\Sigma$ be the collection of all the projection operators on the finite-dimensional Hilbert space and let the dimension of this space be greater than 1.\\

\noindent Consider the set of the invariant subspaces invariant under each projection operator $\hat{P}_\diamond$ from the collection $\Sigma$:\smallskip

\begin{equation}  %{Eq.45}
   \mathcal{L}(\Sigma)
   =
   \bigcap_{\hat{P}_\diamond \in \Sigma}
      \mathcal{L}(\hat{P}_\diamond)
   \;\;\;\;  .
\end{equation}
\smallskip

\noindent The notation for this set refers to the intersections such as\smallskip

\begin{equation}  %{Eq.46}
   \mathcal{L}(\hat{P}_i^{(Q)})
   \cap
   \mathcal{L}(\hat{P}_j^{(Q^\prime)})
   =
   \mathrm{ran}(\hat{R_i})
   \cap
   \mathrm{ran}(\hat{R_j^\prime})
   =
   \mathrm{ran}(\hat{R_i}\hat{R_j^\prime})
   \;\;\;\;  ,
\end{equation}
\smallskip

\noindent where $\hat{R_i}$ and $\hat{R_j^\prime}$ stand for the sets $\{\hat{0}, \hat{P}_i^{(Q)}, \neg\hat{P}_i^{(Q)}, \hat{1}\}$ and $\{\hat{0}, \hat{P}_j^{(Q^\prime)}, \neg\hat{P}_j^{(Q^\prime)}, \hat{1}\}$, respectively, in which $\hat{P}_i^{(Q)}$ and $\hat{P}_j^{(Q^\prime)}$ are the incommutable projection operators from the different maximal contexts $\Sigma^{(Q)}=\{\hat{P}_i^{(Q)}\}_{i=1}$ and $\Sigma^{(Q^\prime)}=\{\hat{P}_j^{(Q^\prime)}\}_{j=1}$.\\

\noindent Hence, one may expect that $\mathcal{L}(\Sigma)$ contains $\mathrm{ran}(\hat{P}_i^{(Q)}) \cap \mathcal{H}$ and $\mathcal{H} \cap\, \mathrm{ran}(\hat{P}_j^{(Q^\prime)})$ as well as $\mathrm{ran}(\hat{P}_i^{(Q)}) \,\cap\, \mathrm{ran}(\hat{P}_j^{(Q^\prime)})$, i.e., the incompatible subspaces $\mathrm{ran}(\hat{P}_i^{(Q)})$ and $\mathrm{ran}(\hat{P}_j^{(Q^\prime)})$ together with their meet $\{0\}$.

\vspace*{3mm}

\begin{theorem} %Theorem 3
The column spaces of the incommutable projection operators cannot meet each other on the finite-dimensional Hilbert space $\mathcal{H}$ with $\mathrm{dim}(\mathcal{H}) > 1$.
\end{theorem}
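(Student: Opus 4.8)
The plan is to read the claim as asserting that the column spaces of two incommutable projectors cannot be common elements of the poset $\mathcal{L}(\Sigma) = \bigcap_{\hat{P}_\diamond \in \Sigma} \mathcal{L}(\hat{P}_\diamond)$, and hence cannot possess a meet inside it the way commuting (single-context) ranges do in Theorems 1 and 2. First I would pin down what membership in $\mathcal{L}(\Sigma)$ requires: since both incommutable operators $\hat{P}_i^{(Q)}$ and $\hat{P}_j^{(Q^\prime)}$ belong to $\Sigma$, a subspace lies in $\mathcal{L}(\Sigma)$ only if it is invariant under every projector of $\Sigma$, in particular under these two. So the question reduces to whether $\mathrm{ran}(\hat{P}_i^{(Q)})$ can be invariant under $\hat{P}_j^{(Q^\prime)}$ and vice versa.

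The key algebraic step, and the one that genuinely uses incommutability, is the standard fact that for orthogonal projectors $\hat{P}$ and $\hat{Q}$ the range $\mathrm{ran}(\hat{Q})$ is invariant under $\hat{P}$ if and only if $\hat{P}\hat{Q}=\hat{Q}\hat{P}$. I would prove this directly: invariance of $\mathrm{ran}(\hat{Q})$ under $\hat{P}$ means $\hat{P}\hat{Q}=\hat{Q}\hat{P}\hat{Q}$; taking the adjoint and using self-adjointness of the projectors gives $\hat{Q}\hat{P}=\hat{Q}\hat{P}\hat{Q}$, and combining the two yields $\hat{P}\hat{Q}=\hat{Q}\hat{P}$. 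Equivalently, the product $\hat{R}_i\hat{R}_j^\prime$ of (\ref{four}) is again a projector exactly when the two operators commute. Consequently, because $\hat{P}_i^{(Q)}$ and $\hat{P}_j^{(Q^\prime)}$ are incommutable, neither range is invariant under the other, so neither $\mathrm{ran}(\hat{P}_i^{(Q)})$ nor $\mathrm{ran}(\hat{P}_j^{(Q^\prime)})$ belongs to $\mathcal{L}(\Sigma)$; the meet-by-intersection recipe that worked inside a single maximal context now produces $\mathrm{ran}(\hat{R}_i\hat{R}_j^\prime)$, which is not the range of any projector and therefore not an element of the poset. In this sense the incompatible column spaces have no meet available to them.

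Burnside's theorem then supplies the global statement that closes the door entirely. I would argue that the collection $\Sigma$ of all projectors generates, as an algebra, the whole matrix algebra $M_n(\mathbb{C})$ with $n = \dim(\mathcal{H}) > 1$ --- already a single incommutable pair does this in the $\mathbb{C}^2$ example, the products $\hat{P}_1^{(z)}\hat{P}_1^{(x)}$ and $\hat{P}_1^{(x)}\hat{P}_1^{(z)}$ together with $\hat{P}_1^{(z)}$ and $\hat{P}_1^{(x)}$ spanning $M_2(\mathbb{C})$. By Burnside's theorem, an algebra acting on $\mathcal{H}$ with no nontrivial common invariant subspace is exactly $M_n(\mathbb{C})$, and conversely $M_n(\mathbb{C})$ leaves invariant only $\{0\}$ and $\mathcal{H}$. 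Hence $\mathcal{L}(\Sigma)=\{\mathrm{ran}(\hat{0}),\mathrm{ran}(\hat{1})\}$ collapses to the two trivial subspaces, so the incompatible ranges are not merely without a meet but are absent from the poset altogether --- which is precisely why the full collection of column spaces, compatible and incompatible alike, fails to form a lattice.

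The step I expect to be the main obstacle is reconciling the pairwise wording of the theorem with the scope on which Burnside actually bites. For $\dim(\mathcal{H})>2$ a bare pair of incommutable projectors need not generate $M_n(\mathbb{C})$ and may well share a nontrivial invariant subspace (for instance, block-diagonal embeddings of the spin-half pair), so the irreducibility argument cannot be run on two operators in isolation. The robust part of the claim that survives in every dimension is the range-invariance criterion of the second step --- incompatible ranges are never mutually invariant --- while Burnside must be invoked for the entire collection $\Sigma$ to obtain the triviality of $\mathcal{L}(\Sigma)$. Making this distinction explicit, and being careful that ``cannot meet'' is read as ``have no meet within a common invariant-subspace poset'' rather than ``have empty intersection,'' is where the argument needs the most care.
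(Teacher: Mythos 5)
Your proposal is correct, and its third paragraph is essentially the paper's entire proof: the paper simply asserts that $\Sigma$, being the collection of all projection operators, satisfies $\Sigma = L(\mathcal{H})$, invokes Burnside's theorem to conclude $\mathcal{L}(\Sigma) = \{\mathrm{ran}(\hat{0}), \mathrm{ran}(\hat{1})\}$, and observes that the incompatible ranges, being nontrivial, are therefore not elements of $\mathcal{L}(\Sigma)$ and so ``cannot meet each other.'' Where you genuinely diverge is your second step, the lemma that for self-adjoint projectors $\mathrm{ran}(\hat{Q})$ is invariant under $\hat{P}$ if and only if $\hat{P}\hat{Q} = \hat{Q}\hat{P}$ (via $\hat{P}\hat{Q} = \hat{Q}\hat{P}\hat{Q}$ and its adjoint). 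The paper has nothing like this, and it buys you something real: it proves directly, without Burnside and in every dimension, that the ranges of an incommutable pair fail mutual invariance and hence cannot both belong to $\mathcal{L}(\Sigma)$ --- which is already the theorem as worded. Burnside then serves only to sharpen this to the global statement that $\mathcal{L}(\Sigma)$ is outright trivial. You are also more careful than the paper on two points the paper glosses over: the set of projectors is not literally equal to the algebra $L(\mathcal{H})$ (Burnside applies to the algebra the projectors \emph{generate}, as you correctly say, since a set and the algebra it generates share the same invariant subspaces), and for $\dim(\mathcal{H}) > 2$ a single incommutable pair need not generate $M_n(\mathbb{C})$, so the irreducibility argument must be run on all of $\Sigma$ rather than pairwise --- a scope distinction the paper's proof never makes explicit.
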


\begin{proof}
Let $L(\mathcal{H})$ denote the algebra of linear transformations on $\mathcal{H}$. Since $\Sigma$ is the collection of all the projection operators on $\mathcal{H}$, $\Sigma = L(\mathcal{H})$. In the said case, according to Burnside's theorem on incommutable algebras \cite{Burnside, Rosenthal, Lomonosov, Shapiro}, \textit{the set $\mathcal{L}(\Sigma)$ is irreducible}, that is, it has no nontrivial invariant subspace. In symbols:\smallskip

\begin{equation}  %{Eq.47}
   \Sigma
   =
   L(\mathcal{H})
   \;
   \implies
   \;
   \mathcal{L}(\Sigma)
   =
   \left\{
      \mathrm{ran}(\hat{0})
      ,
      \mathrm{ran}(\hat{1})
   \right\}
   \;\;\;\;  .
\end{equation}
\smallskip

\noindent Being neither $\{0\}$ nor $\mathcal{H}$, the incompatible subspaces $\mathrm{ran}(\hat{P}_i^{(Q)})$ and $\mathrm{ran}(\hat{P}_j^{(Q^\prime)})$ are not elements of the set $\mathcal{L}(\Sigma)$, and so they cannot meet each other.
\end{proof}

\vspace*{2mm}

\begin{corollary} %Corollary 2
$\wedge$ cannot be a binary operation on a pair of the incommutable projection operators on the finite-dimensional Hilbert space $\mathcal{H}$ with $\dim(\mathcal{H}) > 1$.
\end{corollary}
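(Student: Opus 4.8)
The plan is to derive the statement directly from Theorem~3 by exploiting the one-to-one correspondence between a projection operator $\hat{P}_\diamond$ and its column space $\mathrm{ran}(\hat{P}_\diamond)$. Under this correspondence, any candidate binary operation $\wedge$ on projection operators is forced to act as the meet of the associated column spaces; this is exactly the content of Corollary~1, where equation~(\ref{six}) fixes the value of $\hat{P}_i^{(Q)} \wedge \hat{P}_j^{(Q)}$ through the meet $\mathrm{ran}(\hat{P}_i^{(Q)}) \wedge \mathrm{ran}(\hat{P}_j^{(Q)})$ of the column spaces. So the whole question reduces to whether this meet of column spaces exists when the two operators are incommutable.

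First I would recall that a binary operation on a set is by definition a \textit{total} map from ordered pairs into that set: for $\wedge$ to qualify, the expression $\hat{P}_i^{(Q)} \wedge \hat{P}_j^{(Q^\prime)}$ must denote a definite projection operator for every such pair. Next I would transport this requirement across the correspondence: such a projection operator would have to be the one whose column space is the meet $\mathrm{ran}(\hat{P}_i^{(Q)}) \wedge \mathrm{ran}(\hat{P}_j^{(Q^\prime)})$, i.e. the greatest lower bound, inside the common invariant-subspace structure $\mathcal{L}(\Sigma)$, of $\mathrm{ran}(\hat{P}_i^{(Q)})$ and $\mathrm{ran}(\hat{P}_j^{(Q^\prime)})$.

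I would then invoke Theorem~3 to close the argument by contradiction. Assume toward a contradiction that $\wedge$ is a binary operation on the incommutable pair. Then the meet $\mathrm{ran}(\hat{P}_i^{(Q)}) \wedge \mathrm{ran}(\hat{P}_j^{(Q^\prime)})$ would have to exist as an element of $\mathcal{L}(\Sigma)$. But Theorem~3, via Burnside's theorem, shows that for $\dim(\mathcal{H}) > 1$ the incompatible column spaces $\mathrm{ran}(\hat{P}_i^{(Q)})$ and $\mathrm{ran}(\hat{P}_j^{(Q^\prime)})$ are themselves absent from $\mathcal{L}(\Sigma)$, which collapses to $\{\mathrm{ran}(\hat{0}), \mathrm{ran}(\hat{1})\}$; hence they cannot meet each other inside any poset, and the required value of $\hat{P}_i^{(Q)} \wedge \hat{P}_j^{(Q^\prime)}$ simply fails to exist. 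This contradicts totality and establishes the corollary.

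The main obstacle I expect is conceptual rather than computational: pinning down precisely in what sense the meet of the column spaces fails to exist, so that it genuinely obstructs totality of the operation rather than merely returning $\{0\}$ as in the naive set-theoretic intersection. The delicate point is that Theorem~3 removes the incompatible subspaces from the ambient poset $\mathcal{L}(\Sigma)$ altogether, so there is no partially ordered set in which a greatest lower bound could be taken at all; articulating this carefully is what distinguishes ``the meet equals $\{0\}$'' from ``the meet is undefined,'' and it is the latter that the corollary requires.
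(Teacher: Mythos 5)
Your proposal is correct within the paper's framework and follows essentially the same route as the paper's own proof: both invoke the correspondence fixed by equation~(\ref{six}) to tie the operation $\wedge$ on projection operators to the meet of their column spaces, and both then apply Theorem~3 to conclude that, since the incompatible column spaces are absent from $\mathcal{L}(\Sigma)$ (there being no common partially ordered set containing them), the meet --- and hence the operation --- is undefined for incommutable pairs. Your explicit framing via totality of binary operations and the contradiction argument is a presentational refinement of, not a departure from, the paper's reasoning.
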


\begin{proof}
According to (\ref{six}), $\wedge$ is a binary operation on the set of the projection operators generated by the maximal context $\mathcal{L}(\Sigma^{(Q)})$. However, since the partially ordered sets generated by the different maximal contexts, such as $\mathcal{L}(\Sigma^{(Q)})$ and $\mathcal{L}(\Sigma^{(Q^\prime)})$, do not form the common partially ordered set containing the incommutable projection operators when $\mathcal{H}$ is finite and $\dim(\mathcal{H}) > 1$, $\wedge$ cannot be defined as a binary operation on the said set. This means that if $\mathcal{H}$ is finite, $\wedge$ cannot be defined for a pair of the incommutable projection operators. 
\end{proof}

\vspace*{2mm}

\section{Concluding remarks}

\noindent One may ask, is there any mathematical fact opposing the assumption that the collection of all the column spaces containing the compatible and incompatible subspaces forms a lattice? Clearly, if such a fact is absent, then to not allow the distributive axiom in quantum theory can be regarded as justifiable, at least from a mathematical point of view.\\

\noindent As it has been demonstrated in the presented paper, the said mathematical fact is Burnside's theorem on matrix algebras. According to this theorem, if the Hilbert space $\mathcal{H}$ of the quantum system is finite, the column spaces of the incommutable projection operators cannot be elements of one partially ordered set.\\

\noindent Regarding the spin-half particle mentioned in the Introduction, this means the following.\\

\noindent Recall that any matrix $M_{2,2} \in \mathbb{C}^2$ can be expressed as

\begin{equation}  %{Eq.48}
   M_{2,2}
   =
   c\hat{1}
   +\!\!\!\!
   \sum_{Q \in \{z,x,y\}}\!\!\!\!
    a_Q\!
      \left(
         \hat{P}_{1}^{(Q)} - \hat{P}_{2}^{(Q)}
      \right)
   \;\;\;\;  ,
\end{equation}
\smallskip

\noindent where $c$ is a complex number and $a_Q$ is a 3-component complex vector. For that reason, the collection $\Sigma$ of the maximal contexts, namely,\smallskip

\begin{equation}  %{Eq.49}
   \Sigma
   =
   \{ \Sigma^{(Q)} \}_Q
   =
    \{ \hat{P}_{1}^{(Q)}, \hat{P}_{2}^{(Q)} \}_Q
   \;\;\;\;  ,
\end{equation}
\smallskip

\noindent spans the full algebra $L(\mathbb{C}^2)$ of $2\times 2$ complex matrices, i.e., $L(\mathbb{C}^2) = \Sigma$. Thus, in accordance with Burnside’s theorem, there is no subspace of $\mathbb{C}^2$ – other than $\{0\}$ and $\mathbb{C}^2$ – that every member of $\Sigma$ maps into itself. In other words, the intersection of the invariant-subspace lattices of the maximal contexts $\mathcal{L}(\Sigma^{(z)})$, $\mathcal{L}(\Sigma^{(x)})$ and $\mathcal{L}(\Sigma^{(y)})$, explicitly,\smallskip

\begin{equation}  %{Eq.50}
   \mathcal{L}(\Sigma^{(z)})
   =
   \bigg\{
      \{0\}
      ,
      \left\{
         \!\left[
            \begingroup\SmallColSep
            \begin{array}{r}
               a \\
               0
            \end{array}
            \endgroup
         \right]\!
      \right\}
      ,
      \left\{
         \!\left[
            \begingroup\SmallColSep
            \begin{array}{r}
               0 \\
               a
            \end{array}
            \endgroup
         \right]\!
      \right\}
      ,
      \mathbb{C}^2
   \bigg\}
   \;\;\;\;  ,
\end{equation}

\begin{equation}  %{Eq.51}
   \mathcal{L}(\Sigma^{(x)})
   =
   \bigg\{
      \{0\}
      ,
      \left\{
         \!\left[
            \begingroup\SmallColSep
            \begin{array}{r}
               a \\
               a
            \end{array}
            \endgroup
         \right]\!
      \right\}
      ,
      \left\{
         \!\left[
            \begingroup\SmallColSep
            \begin{array}{r}
               a \\
              -a
            \end{array}
            \endgroup
         \right]\!
      \right\}
      ,
      \mathbb{C}^2
   \bigg\}
   \;\;\;\;  ,
\end{equation}

\begin{equation}  %{Eq.52}
   \mathcal{L}(\Sigma^{(y)})
   =
   \bigg\{
      \{0\}
      ,
      \left\{
         \!\left[
            \begingroup\SmallColSep
            \begin{array}{r}
               ia \\
               a
            \end{array}
            \endgroup
         \right]\!
      \right\}
      ,
      \left\{
         \!\left[
            \begingroup\SmallColSep
            \begin{array}{r}
               a \\
              ia
            \end{array}
            \endgroup
         \right]\!
      \right\}
      ,
      \mathbb{C}^2
   \bigg\}
   \;\;\;\;  ,
\end{equation}
\smallskip

\noindent is irreducible, that is,\smallskip

\begin{equation}  %{Eq.53}
   \mathcal{L}(\Sigma)
   =
   \bigcap_{Q \in \{z,x,y\}}
      \mathcal{L}(\Sigma^{(Q)})
   =
   \left\{
      \{0\}
      ,
      \mathbb{C}^2
   \right\}   
   \;\;\;\;  .
\end{equation}
\smallskip

\noindent Hence, the meets such as $\mathrm{ran}(\hat{P}_{1}^{(z)}) \wedge \mathrm{ran}(\hat{P}_{1}^{(x)}) = \{[\begin{smallmatrix} a \\ 0 \end{smallmatrix}]\} \cap \{[\begin{smallmatrix} a \\ a \end{smallmatrix}]\} = \{0\}$ have no meaning as $ \{[\begin{smallmatrix} a \\ 0 \end{smallmatrix}]\}, \{[\begin{smallmatrix} a \\ a \end{smallmatrix}]\} \notin \mathcal{L}(\Sigma)$. Accordingly, the operation $\wedge$ cannot be defined for a pair of the incommutable projection operators such as $\hat{P}_1^{(z)}$ and $\hat{P}_1^{(x)}$.\\

\bibliographystyle{References}
\bibliography{Meet_Ref}

\end{document}